\newtheorem{corollary}{Corollary}[section]
\newtheorem{lemma}{Lemma}[section]
\newtheorem{theorem}{Theorem}[section]
\newcommand{\be}{\begin{equation}}
\newcommand{\ee}{\end{equation}}
\newcommand{\bd}{\begin{displaymath}}
\newcommand{\ed}{\end{displaymath}}
\newcommand{\bea}{\begin{eqnarray}}
\newcommand{\eea}{\end{eqnarray}}
\newcommand{\R}{\mathbb{R}}
\newcommand{\Z}{\mathbb{Z}}
\title{\LARGE \bf On the Zeros of Quasi-Polynomials with Single Delay}
\author{Suat Gumussoy\thanks{Courant Institute of Mathematical Sciences, New York University}}
\date{}
\begin{document}
\maketitle

\begin{abstract}
A new numerical method is introduced for calculation of
quasi-polynomial zeros with constant single delay. The trajectories
of zeros are obtained depending on time-delay from zero to final
time-delay value. The method determines all the zeros of the
quasi-polynomial in any right half-plane. The approach is used to
determine stability analysis of time-delay systems. The method is
easy to implement, robust and applicable to quasi-polynomials with
high order. The effectiveness of the method is shown on an
example.\end{abstract}

\section{Introduction}
The stability of time-delay system is determined by zeros of
quasi-polynomial. The linear time-invariant time-delay system is
stable if the zeros of the characteristic equation are on the left
half-plane. There are many methods in the literature checking the
stability of time-delay systems without computing zeros of the
characteristic equation using Lyapunov theory
\cite{LyapMethodIvanescu, LyapMethodYong, LyapMethodLi}, matrix
measures \cite{MatrixMeasureArun, MatrixMeasureHu}, matrix pencil
techniques \cite{MatrixPencilNiculescu}, direct methods
\cite{DirectMethodSilva, DirectMethodOlgac} (for general survey, see
\cite{NiculescuDelayEffects, Richard2003, GuBook2003} and the
references therein).

An alternative approach for analysis of time-delay systems is to
locate and compute all the zeros of the quasi-polynomial (i.e.,
characteristic equation) in the region $\textrm{Re(s)}>\sigma_0$.
This approach has two advantages over the existing analytical and
graphical tests:
\begin{itemize}
  \item the response of the system can be approximately determined by zeros of the characteristic equation,
  \item the location of the zeros of the quasi-polynomial gives more information than asymptotic stability
  such as $\cal{D}$-stability, fragility of the system, stability
  margins, robustness.
\end{itemize}

The rightmost characteristic roots can be computed by discretization
of infinitesimal generator or the time integration operator of delay
differential equation \cite{EngelBorghs2000}. The infinitesimal
generator of the solution operators semigroup is approximated by
Runge-Kutta \cite{Breda2004} and linear multistep method
\cite{Breda2005, Breda2006}. The rightmost characteristic roots are
approximated by computing dominant eigenvalues of the discretized
time integration operator. The
 discretized version of the delay differential equation is
represented by linear multistep method for next time-delay step
calculation \cite{EngelBorghs2002, Verheyden2007}.

In this paper, we present complex-valued differential equation to
trace the trajectory of the zero of the quasi-polynomial as a
function of time-delay. The numerical solution of differential
equation is obtained by $4^\textrm{th}$ order Runge-Kutta method
with variable step size. All the zeros of the quasi-polynomial can
be traced in the predefined right half-plane using our method. Since
the zeros and their trajectories are available by the method, it is
an important tool for comprehensive stability analysis of time-delay
systems.

Our method finds all the zeros of the quasi-polynomial with single
delay in user-defined right half-plane. Since the location of the
zeros are known unlike stability checking methods
\cite{Richard2003}, the stability of the system can be analyzed in
detail (i.e., fragility of the system, real stability abscissa,
$\mathcal{D}$-stability, robustness, stability margins). Although
there are other methods in the literature calculating the rightmost
roots of the quasi-polynomials \cite{Breda2004, Breda2005,
Breda2006}, \cite{EngelBorghs2000} \cite{EngelBorghs2002},
\cite{Verheyden2007}, these methods approximate the zeros of the
quasi-polynomial for a given time-delay. Our method \emph{traces}
the zeros of quasi-polynomial from $0$ to final time-delay value in
the user-defined region, therefore it is not approximation. In
practice, the stability of time-delay system is not analyzed for a
particular time-delay value but an interval due to variation in the
delay. Since our method finds the location of all the zeros of
quasi-polynomial from $0$ to final delay value in the region, it is
effective to analyze the \emph{delay effect} on system stability.

The rest of the paper is organized as follows. The quasi-polynomial
definition and characteristics of its zeros are described in Section
\ref{sec:qpolychar}. The numerical method for trajectories of zeros
depending time-delay and its analysis are given in Section
\ref{sec:method}. The method is applied to an example for stability
analysis of time-delay system in Section \ref{sec:example}.
Concluding remarks are provided in Section \ref{sec:concl}.

\section{Quasi-Polynomial Definition and Characteristics of its Zeros} \label{sec:qpolychar}
Consider the \textit{retarded type} quasi-polynomial, \be
\label{eq:qpoly} f(s,h)=a(s)+b(s)e^{-hs} \ee where $a$, $b$ are
polynomials of real coefficients with degrees $n_a$, $n_b$ and $h$
is a positive real number. Since the quasi-polynomial is retarded
type, the degree of $a$ is larger than the degree of $b$, i.e.,
$n_a>n_b$.

It is known that the zeros of the quasi-polynomial (\ref{eq:qpoly})
are infinitely many in the left half-plane and finitely many in the
right half-plane for fixed time-delay value. The infinitely many
zeros extend to minus infinity with decreasing real and imaginary
part.

\begin{figure}
\centering \includegraphics[width=.4\textwidth]{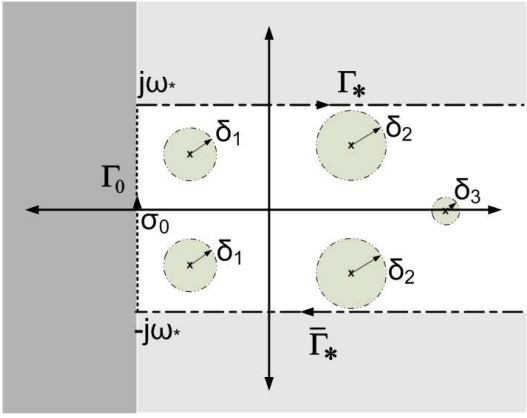}
\caption{Regions of Zeros for $f(s)$} \label{fig:ZerosRegion}
\end{figure}

The following Lemma establishes the connection
between the zeros of quasi-polynomial when $h=0$ and $h>0$.
\begin{lemma} \label{lemma:zeroregion}
For any $\sigma_0\in\R$ which is smaller than minimum real part of
roots of $a(s)+b(s)$, there exists $h>0$ such that the number of the
zeros of the quasi-polynomial (\ref{eq:qpoly}) are equal to the
number of the zeros of $a(s)+b(s)$ in the region
$\textrm{Re(s)}>\sigma_0$. All the infinitely many zeros of the
quasi-polynomial (\ref{eq:qpoly}) are in the complementary region,
i.e., $\textrm{Re(s)}<\sigma_0$.
\end{lemma}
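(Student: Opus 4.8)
The plan is to analyze how the zeros of $f(s,h)=a(s)+b(s)e^{-hs}$ move as $h$ increases from $0$, where $f(s,0)=a(s)+b(s)$, and to show that for small $h$ all the "extra" infinitely-many zeros that the exponential term creates are pushed far to the left of the line $\textrm{Re}(s)=\sigma_0$. Let me think through this carefully.

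First let me set up the counting. At $h=0$ the function $f(s,0)=a(s)+b(s)$ is a polynomial of degree $n_a$ (since $n_a>n_b$), with $n_a$ finite zeros. For $h>0$ the quasi-polynomial acquires infinitely many zeros marching off to $-\infty$. So the claim has two parts: (i) the number of zeros in $\textrm{Re}(s)>\sigma_0$ equals the number of zeros of $a(s)+b(s)$ in that region, and (ii) all the infinitely many zeros lie in $\textrm{Re}(s)<\sigma_0$.

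Let me think about where the zeros live. The dominant balance that creates the chain of far-left zeros: when $|s|$ is large in the left half-plane, $e^{-hs}$ is huge (since $\textrm{Re}(-hs)=-h\,\textrm{Re}(s)$ is large positive when $\textrm{Re}(s)\ll 0$), so the balance $a(s)\approx -b(s)e^{-hs}$ requires $e^{-hs}\approx -a(s)/b(s)$. Since $a/b$ behaves like $s^{n_a-n_b}$ for large $s$, we get roughly $e^{-hs}\sim s^{n_a-n_b}$, i.e. $-hs\approx (n_a-n_b)\log s + \text{const}$. Solving, $s\approx -\frac{1}{h}\left[(n_a-n_b)\log|s| + i(\arg\text{ terms}) + c\right]$. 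As $h\to 0^+$, the magnitude $|s|$ of these roots blows up like $(1/h)\log(1/h)$, so they recede to $-\infty$. This is the heuristic for why the infinitely-many zeros escape past any fixed $\sigma_0$ as $h\to 0$.

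So the approach I would take:

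\begin{enumerate}
\item \emph{Localize the infinitely-many zeros.} On the vertical line $\textrm{Re}(s)=\sigma_0$ and in the strip just to its right, I want to bound $|e^{-hs}|=e^{-h\sigma_0}$ from above for the relevant $h$, and compare $|a(s)|$ against $|b(s)e^{-hs}|$. Because $a$ has strictly higher degree, for $|s|$ large and $\textrm{Re}(s)\geq\sigma_0$ we have $|a(s)|>|b(s)|\,e^{-h\sigma_0}\geq|b(s)e^{-hs}|$ once $h$ is small enough that $e^{-h\sigma_0}$ stays bounded and the polynomial degree gap dominates. This separates a "polynomial-dominated" outer region (containing only finitely many zeros, all close to zeros of $a$) from an "exponential-dominated" region far in the left half-plane.

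\item \emph{Apply Rouch\'e's theorem on a large box.} Fix a rectangular contour $\Gamma$ with right/top/bottom sides at $|s|=R$ large and left side on $\textrm{Re}(s)=\sigma_0$. On this contour I want $|b(s)e^{-hs}|<|a(s)|$, which by Step 1 holds for all sufficiently small $h>0$ (uniformly in $s\in\Gamma$, using compactness of the bounded part and the degree gap on the unbounded part as $R\to\infty$). Then Rouch\'e gives that $f(s,h)=a(s)+b(s)e^{-hs}$ and $a(s)$ have the same number of zeros inside $\Gamma$. Letting $R\to\infty$, $f(s,h)$ has the same number of zeros as $a(s)$ in $\textrm{Re}(s)>\sigma_0$.
\end{enumerate}

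There is a gap I need to close: Rouch\'e compares $f$ with $a$, but the statement compares with $a+b$. So I would interpolate with a second Rouch\'e/continuity argument at $h=0$. Consider the homotopy family for $h\in[0,h^\ast]$. As $h$ varies continuously, zeros move continuously and none crosses the boundary $\textrm{Re}(s)=\sigma_0$ provided $f(s,h)\neq 0$ on that line for all $h\in[0,h^\ast]$. Establishing this no-crossing condition on the line is really the crux: I must show $\min_{\textrm{Re}(s)=\sigma_0,\,h\in[0,h^\ast]}|f(s,h)|>0$. Because $\sigma_0$ is strictly less than the minimum real part of the roots of $a+b$, the function $f(\cdot,0)=a+b$ has no zero on $\textrm{Re}(s)=\sigma_0$, so $|a+b|$ is bounded below there; by continuity and the escape of far-left zeros (Step 1), one can choose $h^\ast$ small enough to keep $|f(s,h)|$ bounded away from zero on the whole line. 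Then the count in $\textrm{Re}(s)>\sigma_0$ is invariant along the homotopy, so it equals the count for $a+b$ at $h=0$. Combined with Steps 1--2, the finitely many zeros in the right region are accounted for, and the remaining infinitely many zeros must all lie in $\textrm{Re}(s)<\sigma_0$.

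The main obstacle I anticipate is making the no-crossing/uniform-lower-bound estimate on the line $\textrm{Re}(s)=\sigma_0$ genuinely uniform in both $s$ (which ranges over an \emph{unbounded} line) and $h\in[0,h^\ast]$ simultaneously. Splitting the line into a large compact segment (where continuity in $(s,h)$ and the $h=0$ non-vanishing suffice) and its two unbounded tails (where the degree gap $n_a>n_b$ forces $|a(s)|>|b(s)|e^{-h\sigma_0}\ge|b(s)e^{-hs}|$ once $|s|$ is large) is the natural way to handle this, and pinning down the threshold $h^\ast$ from these two regimes is where the real work lies.
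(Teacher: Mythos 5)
Your proposal is correct and follows essentially the same route as the paper: both arguments use the degree gap $n_a>n_b$ to exclude zeros from the unbounded part of the closed region $\textrm{Re}(s)\geq\sigma_0$ uniformly for small $h$, and then invoke a Rouch\'e/zero-count-continuity argument against $a+b$ on the remaining compact piece of the boundary, where the perturbation $b(s)(e^{-hs}-1)$ is made small by taking $h$ small. The only differences are cosmetic --- the paper runs Rouch\'e once on the semi-infinite strip bounded by $\Gamma_0\cup\Gamma_*\cup\bar{\Gamma}_*$ and handles the outer region separately, whereas you phrase the count invariance as a homotopy in $h$ over a truncated region; your intermediate Step 2 (Rouch\'e against $a$ alone) would not give the right comparison count and is in any case superseded by the homotopy argument you substitute for it.
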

\begin{proof}
Fix $\sigma_0$ as defined in the Lemma, since the degree of
polynomial $a$ is larger than polynomial $b$, it is possible to
choose $\omega_*$ large enough such that \bd
\sup_{s\in\Gamma_*}\left|\frac{b(s)}{a(s)+b(s)}\right|<\epsilon_*
\ed holds where $\Gamma_*=\{s\in\sigma+j\omega_*$ for
$\sigma_0<\sigma<\infty\}$. Calculate \bd
r_0\doteq\max_{s\in\Gamma_0}\left|\frac{b(s)}{a(s)+b(s)}\right| \ed
where $\Gamma_0=\{s\in\sigma_0+j\omega$ for
$-\omega_*<\omega<\omega_*\}$. The real number $r_0$ is well-defined
 since the curve $\Gamma_0$ passes from left of the zeros of the polynomial $a+b$. Choose $h$ small enough such that
\bd
\max_{s\in\Gamma_0}|e^{-hs}-1|<r_0^{-1}
\ed is satisfied.

The functions $a(s)+b(s)$ and $b(s)(e^{-hs}-1)$ are analytic on and
inside the contour $\Gamma=\Gamma_0\bigcup \Gamma_*\bigcup
\bar{\Gamma}_*$. Note that on the contour $\Gamma$, \bd
\max_{s\in\Gamma}\left|\frac{b(s)(e^{-hs}-1)}{a(s)+b(s)}\right|<1
\ed holds. Using \textit{Rouch\'{e}'s Theorem} (see
\cite{Conway1978}, Theorem 3.8) , the functions $a(s)+b(s)$ and
$a(s)+b(s)e^{-hs}$ have the same number of zeros inside $\Gamma$.
The light gray region of Figure \ref{fig:ZerosRegion} satisfies, \bd
|a+be^{-hs}|>|b|(\epsilon_*^{-1}-1-e^{-h\sigma_0})>0 \ed for
sufficiently small $\epsilon_*$. Therefore, all other zeros lie in
$\textrm{Re(s)}<\sigma_0$ which is dark gray region in Figure
\ref{fig:ZerosRegion}.
\end{proof}

The Lemma \ref{lemma:zeroregion} shows that for sufficiently small
$h$, all the zeros of the quasi-polynomial (\ref{eq:qpoly}) are
close to the zeros of $a+b$ and all the other zeros can be pushed to
the left by choosing smaller $h$. Therefore, one can choose the
region large or small depending on requirements of stability
analysis. This qualitative behavior makes boundary crossing
detection possible.

The following Lemma can be used to calculate the distance of the
zero of $f(s,h)$ from zero of $f(s,h+\Delta h)$.

\begin{lemma} \label{lemma:distancezero}
Let $s_i^h$  be the zero of $f(s,h)$ (\ref{eq:qpoly}), and
$s_i^{h+\Delta h}$ be the zero of $f(s,h+\Delta h)$ where $s_i^h$
and $s_i^{h+\Delta h}$ are on the same trajectory. The distance
between $s_i^h$ and $s_i^{h+\Delta h}$, i.e., $\delta_i=|\Delta
s_i|=|s_i^{h+\Delta h}-s_i^h|$ is, \bea \nonumber
\delta_i&\leq&\frac{|s_i^h| |\Delta
h|}{\left|\frac{a'(s_i)}{a(s_i)}-\frac{b'(s_i)}{b(s_i)}+h
\right|}+\mathcal{O}(|\Delta h|^2) \eea where $\mathcal{O}()$ is the
order function.
\end{lemma}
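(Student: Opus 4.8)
The plan is to treat the tracked zero as a smooth branch $s_i(h)$ defined implicitly by the identity $f(s_i(h),h)\equiv 0$, and to obtain $ds_i/dh$ by implicit differentiation; the stated bound is then simply the first-order Taylor estimate of $\Delta s_i$ along the trajectory.

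First I would differentiate $a(s_i(h))+b(s_i(h))e^{-hs_i(h)}=0$ with respect to $h$, applying the chain rule to every occurrence of $s_i(h)$ and to the exponent $-hs_i(h)$. Writing $\dot s=ds_i/dh$ and collecting the terms proportional to $\dot s$ gives
\be
\dot s\big[a'(s_i)+b'(s_i)e^{-hs_i}-h\,b(s_i)e^{-hs_i}\big]=s_i\,b(s_i)e^{-hs_i}.
\ee
The bracket is exactly $\partial f/\partial s$ evaluated at $(s_i,h)$, so this step simultaneously isolates $\dot s$ and exposes the nondegeneracy condition needed later.

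Next I would eliminate the exponential using the defining relation $e^{-hs_i}=-a(s_i)/b(s_i)$, which holds precisely because $s_i$ is a zero of $f(\cdot,h)$. Substituting into both sides, the right-hand side collapses to $-s_i\,a(s_i)$ and the bracket factors as $a(s_i)\big[\tfrac{a'(s_i)}{a(s_i)}-\tfrac{b'(s_i)}{b(s_i)}+h\big]$; cancelling $a(s_i)$ yields the closed form
\be
\frac{ds_i}{dh}=\frac{-s_i}{\dfrac{a'(s_i)}{a(s_i)}-\dfrac{b'(s_i)}{b(s_i)}+h}.
\ee
Finally, Taylor expanding along the trajectory, $\Delta s_i=\dot s\,\Delta h+\mathcal{O}(|\Delta h|^2)$, and taking moduli gives $\delta_i=|\dot s|\,|\Delta h|+\mathcal{O}(|\Delta h|^2)$, which is exactly the claimed estimate — indeed an equality to leading order, so the inequality is immediate.

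The main obstacle is justifying that $s_i(h)$ is a differentiable branch at all: the implicit function theorem requires $\partial f/\partial s\neq0$ at $(s_i,h)$, i.e. that $s_i$ be a simple zero with $a(s_i)\neq0$ and $b(s_i)\neq0$. The factorization above shows this is equivalent to the non-vanishing of the denominator $\frac{a'(s_i)}{a(s_i)}-\frac{b'(s_i)}{b(s_i)}+h$, so the hypotheses that make the bound meaningful and those that make the derivation legitimate coincide. I would therefore state this simple-zero assumption explicitly, noting that $a(s_i)\neq0$ follows from $f(s_i,h)=0$ together with $b(s_i)\neq0$.
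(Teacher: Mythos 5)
Your proposal is correct and follows essentially the same route as the paper: both derive $\Delta s_i=-(f_h/f_s)\,\Delta h+\mathcal{O}(|\Delta h|^2)$ (you via implicit differentiation of $f(s_i(h),h)=0$, the paper via a two-variable Taylor expansion of $f$, which is the same first-order computation) and both then use $e^{-hs_i}=-a(s_i)/b(s_i)$ to reduce $f_h/f_s$ to the stated closed form. Your explicit remark that the derivation needs $f_s(s_i^h,h)\neq0$ (equivalently, the non-vanishing of the denominator $\tfrac{a'}{a}-\tfrac{b'}{b}+h$) is a point the paper leaves implicit, but it does not change the argument.
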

\begin{proof}
 Taylor expansion for $f(s,h)$ can be written as,
\begin{multline}
f(s_i^{h+\Delta h},h+\Delta h)=f(s_i^h,h)+f_s(s_i^h,h)\Delta
s_i\\+f_h(s_i^h,h)\Delta h+\mathcal{O}(|\Delta h|^2)
\end{multline} where $f_s$, $f_h$ are derivative of function $f(s,h)$ with
respect to $s$ and $h$. Since $s_i^{h+\Delta h}$ and $s_i^h$ are
zeros of $f(s,h)$, \bea \label{eq:discODE}
\nonumber 0&=&f_s(s_i^h,h)\Delta s_i+f_h(s_i^h,h)\Delta h+\mathcal{O}(|\Delta h|^2), \\
\Delta s_i&=&-\frac{f_h(s_i^h,h)}{f_s(s_i^h,h)}\Delta h+\mathcal{O}(|\Delta h|^2), \\
\nonumber \delta_i&\leq&\left|\frac{f_h(s_i^h,h)}{f_s(s_i^h,h)}\right||\Delta h|+\mathcal{O}(|\Delta h|^2), \\
\nonumber \delta_i&\leq&\left|\frac{-bs_i^he^{-hs_i^h}}{a'(s)+(b'(s)-hb(s))e^{-hs_i^h}}\right||\Delta h|+\mathcal{O}(|\Delta h|^2), \\
\nonumber
\delta_i&\leq&\left|\frac{a(s_i^h)s_i^h}{a'(s_i^h)-(\frac{b'(s_i^h)}{b(s_i^h)}-h)a(s_i^h)}\right||\Delta
h|+\mathcal{O}(|\Delta h|^2). \eea
\end{proof}
The following section describes the method to find trajectories of
quasi-polynomial zeros depending time-delay $h$. The convergence
analysis of the method is given. All trajectories of the zeros
except the asymptotic zeros are traced.

\section{The Method and Analysis} \label{sec:method}
\subsection{Trajectory of Quasi-Polynomial Zero in the Region}
\begin{theorem} \label{theorem:trajODE}
Let $s_i^h$ be the zero of the quasi-polynomial (\ref{eq:qpoly}).
The trajectory of $s_i^h$ can be determined by the complex-valued
ordinary differential equation, \be \label{eq:contODE}
\frac{ds}{dh}=-\left(\frac{f_h(s,h)+f(s,h)}{f_s(s,h)}\right), \quad
s(h)=s_i^h \ee where $s$ is the trajectory of the zero and $s_i^h$
is the value when time-delay is equal to $h$. The trajectory is
locally convergent to trajectory of zero.
\end{theorem}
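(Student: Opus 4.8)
The plan is to exhibit the exact zero trajectory as a solution of \eqref{eq:contODE} and then to read the extra summand $f(s,h)$ in the numerator as the term that manufactures the asserted local convergence. First I would observe that along a genuine simple-zero trajectory $s=s(h)$ one has $f(s(h),h)\equiv 0$ on an interval of delay values, so differentiating this identity in $h$ by the chain rule gives
\[
f_s(s,h)\frac{ds}{dh}+f_h(s,h)=0,
\]
that is, $ds/dh=-f_h/f_s$, the standard implicit-function relation. Since $f(s(h),h)=0$ on this curve, the term $f(s,h)$ in (\ref{eq:contODE}) vanishes identically there, and (\ref{eq:contODE}) collapses to exactly this relation. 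Hence every simple-zero trajectory is an exact solution of (\ref{eq:contODE}), which already shows the ODE correctly transports the zeros.

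The second step is to make precise the phrase \emph{locally convergent}, which is precisely what the additional $f(s,h)$ buys. I would introduce the residual $g(h)\doteq f(s(h),h)$ along an \emph{arbitrary} solution of (\ref{eq:contODE}) -- in particular one that starts only near a zero rather than exactly on it -- and differentiate along the flow. Substituting the ODE and using $f(s(h),h)=g(h)$,
\[
g'(h)=f_s\frac{ds}{dh}+f_h=f_s\Big(-\frac{f_h+g}{f_s}\Big)+f_h=-g,
\]
so $g(h)=g(h_0)e^{-(h-h_0)}$ decays exponentially to zero. Thus any trajectory launched with a small defect is attracted back to the zero set $\{f=0\}$, which is exactly the claimed local convergence. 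It is worth noting that without the $+f$ term one would get $g'(h)=0$, so the defect would merely be conserved instead of damped; this is the key design feature and I would highlight it as the crux of the statement.

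Finally I would settle well-posedness. The right-hand side of (\ref{eq:contODE}) is analytic wherever $f_s(s,h)\neq 0$, and with the explicit forms $f_s=a'(s)+(b'(s)-hb(s))e^{-hs}$ and $f_h=-s\,b(s)e^{-hs}$ this condition holds at any simple zero; by continuity $f_s$ stays bounded away from $0$ on a neighborhood, so Picard--Lindel\"of supplies a unique local solution, and the exponential decay of $g$ keeps $s(h)$ inside that neighborhood, closing the loop. I expect the main obstacle to be precisely the singular set $f_s=0$, where distinct zero trajectories coalesce at multiple roots and the vector field blows up; this is why the convergence is necessarily \emph{local}, guaranteed only while the tracked branch avoids such coalescence points, and I would state the simple-zero hypothesis explicitly to handle it.
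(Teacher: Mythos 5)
Your proposal is correct and is essentially the paper's own argument in sharper form: the paper defines the Lyapunov function $V(s,h)=\tfrac{1}{2}\overline{f(s,h)}f(s,h)$ and computes $dV/dh=-|f(s,h)|^2=-2V$ along the flow, which is exactly the modulus-squared version of your defect identity $g'=-g$ for $g(h)=f(s(h),h)$. Your additions (verifying that exact zero curves solve the ODE, noting that dropping the $+f$ term would only conserve the defect, and the Picard--Lindel\"of/simple-zero discussion of where $f_s\neq 0$) go beyond what the paper writes but do not change the underlying approach.
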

\begin{proof}
Define the local Lyapunov function
$V(s,h)=\frac{1}{2}\overline{f(s,h)}f(s,h)$. Note that $V(s,h)$ is
positive definite and assumes the value $0$ locally at the desired
equilibrium $f(s,h)=0$. \bea
\nonumber \frac{dV(s,h)}{dh}&=&-Re(\overline{f(s,h)}(f_s(s,h)\frac{ds}{dh}+f_h(s,h)), \\
\nonumber &=&-|f(s,h)|^2\leq0. \eea Note that when the trajectory is
close to local zero, it is convergent. When the trajectory coincides
with the trajectory of the zero of the quasi-polynomial
(\ref{eq:qpoly}), using Taylor expansion one can show that \be
\label{eq:Otube} f(s_i^{h+\Delta h},h+\Delta
h)\approx\mathcal{O}(|\Delta h|^2). \ee
\end{proof}
\begin{figure}
\centering \includegraphics[width=.4\textwidth]{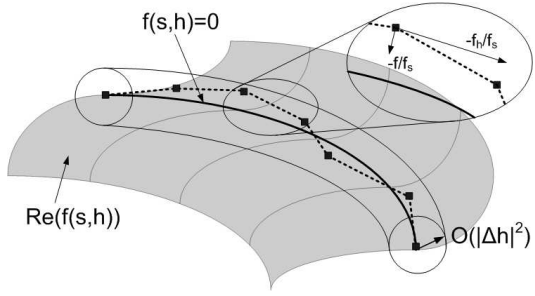}
\caption{Trajectory of $f(s,h)$ zero on the $\textrm{Re}(f(s,h))$
surface} \label{fig:MethodAnalysis}
\end{figure}

The exact solution of differential equation (\ref{eq:contODE})
follows the trajectory of the zero of the quasi-polynomial
(\ref{eq:qpoly}). Since the differential equation is solved by
numerical methods in general, the solution is not exact and
numerical errors are introduced due to step size of the numerical
algorithm. However, the differential equation is numerically robust
and the solution converges to trajectory of the zero of the
quasi-polynomial locally and the algorithm decreases its error
quadratically when step size is halved because of (\ref{eq:Otube}).

\subsection{Quasi-Polynomial Zeros Crossing the Boundary}
Given the region $\textrm{Re(s)}>\sigma_0$ for fixed
$\sigma_0\in\R$, it is possible to trace the trajectory of roots in
the region using (\ref{eq:contODE}) as time-delay $h$ changes. There
may be additional zeros crossing the boundary
$\textrm{Re(s)}=\sigma_0$ and entering into the region
$\textrm{Re(s)}>\sigma_0$.

It is clear that when $h=0$, the zeros of quasi-polynomial in the
region $\textrm{Re(s)}>\sigma_0$ are the roots of polynomial
$a(s)+b(s)$ inside the region. The method traces all these zeros
until they are outside the region. It is possible to find the
zero-crossing frequencies $\omega_i$ on the boundary
$s=\sigma_0+j\omega$ for $\omega\in\R$ using magnitude and phase
equations as \bea
\nonumber 0&=&a(\sigma_0+j\omega)+b(\sigma_0+j\omega)e^{-h(\sigma_0+j\omega)}, \\
\nonumber h(\omega)&=&-\frac{\ln{\left|\frac{a(\sigma_0+j\omega)}{b(\sigma_0+j\omega)}\right|}}{\sigma_0}, \\
\nonumber
0&=&h(\omega)\omega+2k\pi+\angle{-\frac{a(\sigma_0+j\omega)}{b(\sigma_0+j\omega)}},\;k\in\Z.
\eea The zero-crossing frequencies are $\omega_i\geq0$ values such
that the last equation is satisfied. Note that for each $\omega_i$,
the corresponding time-delay $h(\omega_i)$ should be positive. There
are finitely many $\omega_i$ satisfying the equation for fixed
time-delay value, $h$. When $\sigma_0=0$ is chosen, it is possible
to find analytical method to calculate $\omega_i$ and corresponding
$h(\omega_i)$ as explained in \cite{GuBook2003}.

The zeros of quasi-polynomial (\ref{eq:qpoly}),
$s=\sigma_0+j\omega_i$, cross the boundary at the time-delay values
$h(\omega_i)$ for $k=~1,\ldots,n_h$. However, we are interested in
the characteristic roots crossing the boundary and entering into the
region.

\begin{corollary} \label{cor:corRT}
The zero of quasi-polynomial (\ref{eq:qpoly}),
$s=~\sigma_0+j\omega_i$, crosses the boundary and enters into the
region $\textrm{Re(s)}>\sigma_0$  at the time-delay $h(\omega_i)$ if
\be \label{eq:RT} Re
\left[\frac{1}{s}\left(\frac{b'(s)}{b(s)}-\frac{a'(s)}{a(s)}-h(\omega_i)\right)
\right]_{\substack{s=\sigma_0+j\omega_i}}>0. \ee \vspace{.1cm}
\end{corollary}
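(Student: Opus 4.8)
The plan is to read the crossing-direction condition as the sign of $\frac{d\,\textrm{Re}(s)}{dh}$ at the crossing point. A zero sitting on the boundary at $s=\sigma_0+j\omega_i$ for delay $h(\omega_i)$ enters the region $\textrm{Re(s)}>\sigma_0$ exactly when its real part is increasing with $h$, i.e. when $\textrm{Re}\!\left(\frac{ds}{dh}\right)>0$ there. So the entire corollary reduces to computing $\frac{ds}{dh}$ along the trajectory and extracting its real part.

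First I would differentiate the defining relation $f(s,h)=0$ implicitly, exactly as in the proof of Lemma \ref{lemma:distancezero}, to get $\frac{ds}{dh}=-f_h/f_s$ with $f_s=a'(s)+(b'(s)-hb(s))e^{-hs}$ and $f_h=-s\,b(s)e^{-hs}$. The key simplification is to feed the zero condition back in: at a root, $b(s)e^{-hs}=-a(s)$. Substituting this collapses the numerator $s\,b(s)e^{-hs}$ to $-s\,a(s)$, and turns both $b'(s)e^{-hs}$ and $h\,b(s)e^{-hs}$ in the denominator into multiples of $a(s)$, so that $a(s)$ factors out of the denominator as well. After the common factor $a(s)$ cancels I expect to land on the clean form
\[
\frac{ds}{dh}=\frac{-s}{\dfrac{a'(s)}{a(s)}-\dfrac{b'(s)}{b(s)}+h}.
\]
This step requires $b(s)\neq0$ at the crossing point, which holds because $s$ is a genuine root of $f$ and $a,b$ have no common zero on $\Gamma_0$.

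The final step is the sign bookkeeping that produces the stated arrangement. Writing $D=\frac{a'(s)}{a(s)}-\frac{b'(s)}{b(s)}+h$ I have $\frac{ds}{dh}=-s/D$ and I want $\textrm{Re}(-s/D)>0$. I would invoke the elementary identity $\textrm{Re}(1/z)=\textrm{Re}(z)/|z|^2$, so that $\textrm{Re}(z)$ and $\textrm{Re}(1/z)$ always carry the same sign. Applying it to $z=D/s$ gives
\[
\textrm{Re}\!\left(-\frac{s}{D}\right)>0
\iff \textrm{Re}\!\left(\frac{D}{s}\right)<0
\iff \textrm{Re}\!\left(-\frac{D}{s}\right)>0,
\]
and $-D/s$ is precisely $\frac{1}{s}\big(\frac{b'(s)}{b(s)}-\frac{a'(s)}{a(s)}-h\big)$. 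Evaluating at $s=\sigma_0+j\omega_i$ with $h=h(\omega_i)$ recovers (\ref{eq:RT}).

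The computation itself is routine once the zero relation is substituted; the one place needing care is this reciprocal/sign argument, since implicit differentiation naturally puts $s$ in the numerator and the logarithmic-derivative combination in the denominator, whereas the corollary states the reversed form with $1/s$ in front. The equal-sign property of $\textrm{Re}(z)$ and $\textrm{Re}(1/z)$ is what reconciles the two, and I expect that to be the single conceptual point of the proof.
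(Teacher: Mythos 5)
Your proposal is correct and takes essentially the same route as the paper's proof: implicit differentiation of $f(s,h)=0$ to get $ds/dh=-f_h/f_s$, substitution of the root relation $b(s)e^{-hs}=-a(s)$ to reach $ds/dh=\left[\frac{1}{s}\left(\frac{b'(s)}{b(s)}-\frac{a'(s)}{a(s)}-h\right)\right]^{-1}$, and the sign-preserving property of $\mathrm{Re}(1/z)$ to convert $\mathrm{Re}(ds/dh)>0$ into (\ref{eq:RT}). The only difference is presentational: the paper states the final equivalence without justification, while you make the $\mathrm{Re}(1/z)=\mathrm{Re}(z)/|z|^2$ step (and the implicit requirement $b(s)\neq0$ at the crossing point) explicit.
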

\begin{proof}
The proof is based on \cite{Sipahi2006} and some algebraic
manipulations. One can show that \bea
\nonumber \frac{ds}{dh}&=&\frac{bse^{-hs}}{a'(s)+(b'(s)-h b(s))e^{-hs}}, \\
\nonumber &=&\frac{1}{\frac{1}{s}\left(\frac{a'(s)}{b(s)e^{-hs}}+\left(\frac{b'(s)}{b(s)}-h\right)\right)}, \\
\nonumber
&=&\frac{1}{\frac{1}{s}\left(\frac{b'(s)}{b(s)}-\frac{a'(s)}{a(s)}-h\right)}.
\eea The crossing zero enters the region if \bd
Re\left(\left.\frac{ds}{dh}\right|_{\substack{s=\sigma_0+j\omega_i
\\h=h(\omega_i)}
}\right)>0
\ed which is equivalent to the condition (\ref{eq:RT}).
\end{proof}

The Corollary \ref{cor:corRT} is very important. It reduces the
problem of the selection of the zeros entering into the region
$\textrm{Re(s)}>\sigma_0$ into inequality condition (\ref{eq:RT})
check for finite number of complex points, $s=j\omega_i$. Since we
are interested in the zeros entering the region, define the
zero-crossing points $s_{zc,k}=\sigma_0+j\omega_k$ satisfying
Corollary \ref{cor:corRT}. Note that the points
$s_{zc,k}=\sigma_0+j\omega_k$ cross the boundary for time-delays
$h(\omega_i)$ for $k=1,\ldots,n_{zc}$.

\subsection{The Method} \label{sec:themethod}
Our objective is to find all the zeros of quasi-polynomial
(\ref{eq:qpoly}) in the region $\textrm{Re(s)}>\sigma_0$ when
time-delay is equal to~$h$.
\begin{enumerate}
  \item Choose $\sigma_0\in\R$ according to design requirements,
  \item Define the maximum error tolerance, $\epsilon_{tz}$, of the trajectory of the zero,
  \item Calculate all the zero-crossing frequencies $\omega_i$ and the corresponding
  time-delay values $h(\omega_i)$ for \mbox{$i=1,\ldots,n_h$},
  \item Select the zero-crossing frequencies satisfying (\ref{eq:RT}) where the zeros enter
  into the region $\textrm{Re(s)}>\sigma_0$ (i.e., find $s_{sz,k}$ for $k=1,\ldots,n_{zc}$),
  \item Define the delay set $\mathbf{h}=\{0\}\bigcup_{i=1}^{n_{zc}}\{h(\omega_k)\}$ with elements in ascending order,
  \item Define the set of zeros in the region $\textrm{Re(s)}>\sigma_0$ as
  $\mathbf{z}=\{z\in\mathbb{C},\;a(z)+b(z)=0, \textrm{Re(z)}>\sigma_0\}$ for
  $\mathbf{h}_1=0$,
  \item Trace all the trajectories in the set $\mathbf{z}$ by
  (\ref{eq:contODE}) until next element of $\mathbf{h}$ (i.e., Runge-Kutta
  within the error tolerance $\epsilon_{tz}$). If any zero goes out
  of the region, drop the zero from the set $\mathbf{z}$. Add the new point
  \mbox{$s=\sigma_0+j\omega_i$}
  for next element of $\mathbf{h}$ into the set $\mathbf{z}$. Repeat
  this step for all the elements of $\mathbf{h}$,
  \item Trace the zeros in $\mathbf{z}$ until $h$. All the elements of $\mathbf{z}$ are the zeros of the
  quasi-polynomial (\ref{eq:qpoly}) in the region $\textrm{Re(s)}>\sigma_0$.
  \end{enumerate}

\textbf{Remarks:}
\begin{enumerate}
  \item The selection of $\sigma_0$ depends on objective of
  analysis. It can be selected close to imaginary axis to check the stability
  of the system. Since $\sigma_0$ can be chosen as
  desired, besides stability of the system, other stability criteria can be analyzed using our
  method such as damping criteria, $\mathcal{D}$-stability,
  fragility of the system, real stability distance, stability
  margins. It is possible to set $\sigma_0$ far from imaginary axis in the left half-plane such that the
  impulse response of the system is determined by zeros of the
  quasi-polynomial.
  \item The boundary of the region is defined as straight line in
  this paper for convenience. It is possible to choose the boundary
  as a curve with imaginary part extending to infinity in magnitude
  and bounded real part (i.e., a boundary dividing the complex plane
  vertically). The only necessary restriction on boundary is to
  separate the complex plane into two regions and not extending to
  infinity in magnitude of real part where the quasi-polynomial has the infinitely many zeros.
  The boundary can be vertical any curve, any closed-curve (ellipse, circle).
  \item Lemma \ref{lemma:zeroregion} is essential to calculate the quasi-polynomial zeros
  crossing the boundary. It is guaranteed that when $h$ is small,
  all the zeros of quasi-polynomial are outside the region except
  the zeros of $a(s)+b(s)$ which are initially traced.
  \item There are many methods in the literature to find the solution
  of the complex-valued differential equation (\ref{eq:contODE})
  such as Runge-Kutta, Bulirsch-Stoer method, Adams methods
  \cite{Hoffman2001}. In this paper, variable-step size $4^\textrm{th}$-order Runge-Kutta method is
  used. The algorithm is fast, easy to implement and robust due to
  local convergence of the trajectory of the quasi-polynomial zero
  shown in Theorem \ref{theorem:trajODE}.
  \item Instead of (\ref{eq:contODE}) which uses well-known Newton
  Algorithm in the literature for convergence to zero, there are other
  methods in the numerical analysis. Depending on the application
  and the \mbox{quasi-polynomial (\ref{eq:qpoly})}, different algorithms may be used \cite{Hoffman2001}.
  \item In our implementation, the delay step is adaptive in $4^\textrm{th}$-order Runge-Kutta method.
  If the next delay step is determined by
  \bd
  \Delta h_{i+1}=\Delta h_i \sqrt{\frac{\epsilon_{tz}}{|f(s_i^{h_i})|}}
  \ed where $s_i^{h_i}$ is the numerical solution of (\ref{eq:contODE}) at delay step $h_i$.
  Note that when the trajectory
  deviates from the tolerance for trajectory of the quasi-polynomial zero,
  the next step size is decreased with square root of the deviation due to (\ref{eq:Otube}).
  When the deviation of the trajectory of quasi-polynomial zero is smaller than
  tolerance, the step size is increased to reduce the computational
  effort.
  \item There are numerical methods in the literature to calculate rightmost
  zeros of the quasi-polynomials approximately in the literature.
  Our method has two advantages:
  \begin{itemize}
    \item It can be used to calculate zeros of the quasi-polynomial in any
    user-defined region.
    \item In addition to zeros at time-delay $h$, we have all the
    location of the zeros of the quasi-polynomial in the region from
    $0$ to $h$ which is convenient when the delay effects on
    stability is analyzed.
  \end{itemize}
\end{enumerate}

\section{Example} \label{sec:example}
Consider the following time-delay system from \cite{GuBook2003}:
\bd
\dot{x}(t)=\left(\begin{array}{cc}
                   0 & 1 \\
                   -1 & -1
                 \end{array}
 \right)x(t)+\left(\begin{array}{cc}
                   0 & 0 \\
                   0 & -1
                 \end{array}
 \right)x(t-\tau),\;\tau>0.
\ed By analytical methods, it is shown that the system is stable
$\tau\in[0,\pi)$ \cite{GuBook2003}. When $\tau=\pi$, the
characteristic equation of the system has imaginary zeros, $s=\pm
j$. We apply the method to find all the characteristic roots of the
system for $\tau\in[0,\pi)$ in the region $\textrm{Re(s)}\geq-1$.

The characteristic quasi-polynomial can be found as
\bd
f(s,\tau)=(s^2+s+1)+se^{-\tau s}.
\ed Following the algorithm in Section \ref{sec:themethod},
\begin{enumerate}
\item[1)] Define the region as $\textrm{Re(s)}\geq-1$ where $\sigma_0=-1$,
\item[2)] Error tolerance for trajectory is $\epsilon_{tz}=10^{-3}$,
\item[3-4)] Zero-crossing frequencies $\{\omega_i\}_{i=1}^{12}$ and the corresponding
time-delays $\{h(\omega_i)\}_{i=1}^{12}$ are given in \mbox{Table \ref{table:zerocrossfreq}},
\item[5-6)] The number of the zeros changes in the region when
$\mathbf{h}=\{0\}\cup\{h(\omega_i)\}_{i=0}^{12}$. The initial zeros in the region
are $\mathbf{z}=\{-1.00, -1.00\}$ when $h=0$,
\item[7-8)] All the zeros of quasi-polynomial in the region $\textrm{Re(s)}>-1$ at $h=\pi$ are given
in \mbox{Table \ref{table:therootsintheregion}}.
\end{enumerate}

\begin{table}[t]
\begin{minipage}[b]{0.45\linewidth}
\begin{center}
\caption{Zero-crossing frequencies and the corresponding delays} \label{table:zerocrossfreq}
\begin{tabular}{cc}
   & \\
  \hline
   & \\
  $\omega_i$ & $h(\omega_i)$ \\
   & \\
 \rowcolor[gray]{.9} $2.28$ & $0.65$ \\
 $5.00$ & $1.57$  \\
 \rowcolor[gray]{.9} $7.22$ & $1.96$  \\
 $9.23$ & $2.21$\\
 \rowcolor[gray]{.9} $11.12$ & $2.40$ \\
 $12.92$  &  $2.55$ \\
 \rowcolor[gray]{.9} $14.65$  & $2.68$  \\
 $16.33$  &  $2.79$  \\
 \rowcolor[gray]{.9} $17.97$  & $2.88$  \\
 $19.56$  &  $2.97$  \\
 \rowcolor[gray]{.9} $21.13$  & $3.05$  \\
 $22.66$ &  $3.12$  \\
  \hline
\end{tabular}
\end{center}
\end{minipage}
\hspace{0.5cm}
\begin{minipage}[b]{0.45\linewidth}
\begin{center}
\caption{The zeros of quasi-polynomial in the region when $h=\pi$} \label{table:therootsintheregion}
\begin{tabular}{cc}
   & \\
  \hline
   & \\
 $\pm j$ \\
 \rowcolor[gray]{.9} $-0.30, \pm1.00j$ \\
 $-0.27\pm2.60j$  \\
 \rowcolor[gray]{.9} $-0.47\pm4.54j$  \\
 $-0.59\pm6.52j$ \\
 \rowcolor[gray]{.9} $-0.68\pm8.51j$ \\
 $-0.75\pm10.51j$ \\
 \rowcolor[gray]{.9} $-0.80\pm12.51j$ \\
 $-0.85\pm14.50j$  \\
 \rowcolor[gray]{.9} $-0.89\pm16.50j$  \\
 $-0.93\pm18.50j$ \\
 \rowcolor[gray]{.9} $-0.96\pm20.50j$ \\
 $-0.99\pm22.50j$ \\
  \hline
  \vspace{0.05cm}
\end{tabular}
\end{center}
\end{minipage}
\end{table}

\begin{figure}[b]
\begin{minipage}[t]{4cm}
\begin{center}
\includegraphics[width=4cm]{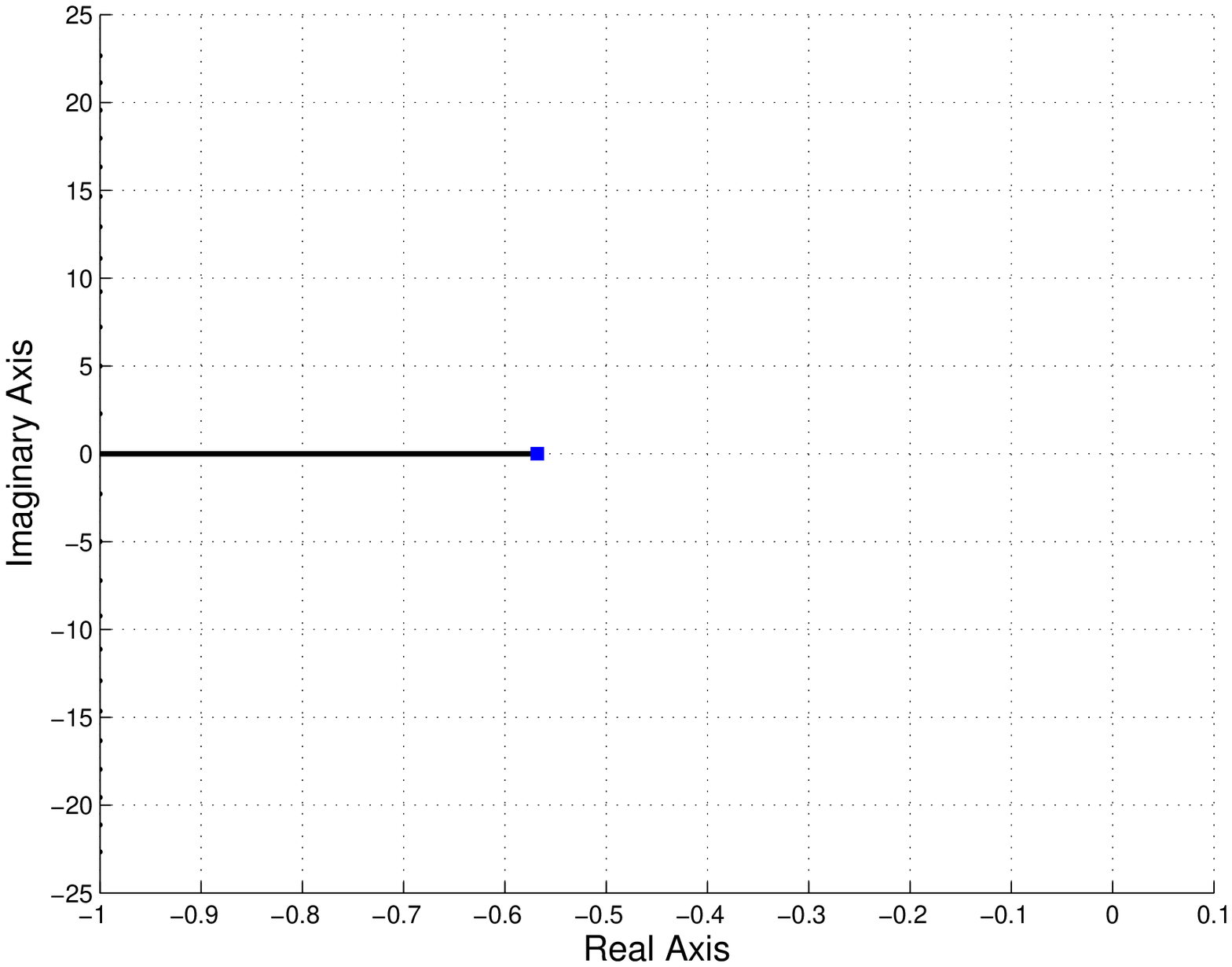}
$\tau\in[0,0.5]$
\end{center}
\end{minipage}
\hfill
\begin{minipage}[t]{4cm}
\begin{center}
\includegraphics[width=4cm]{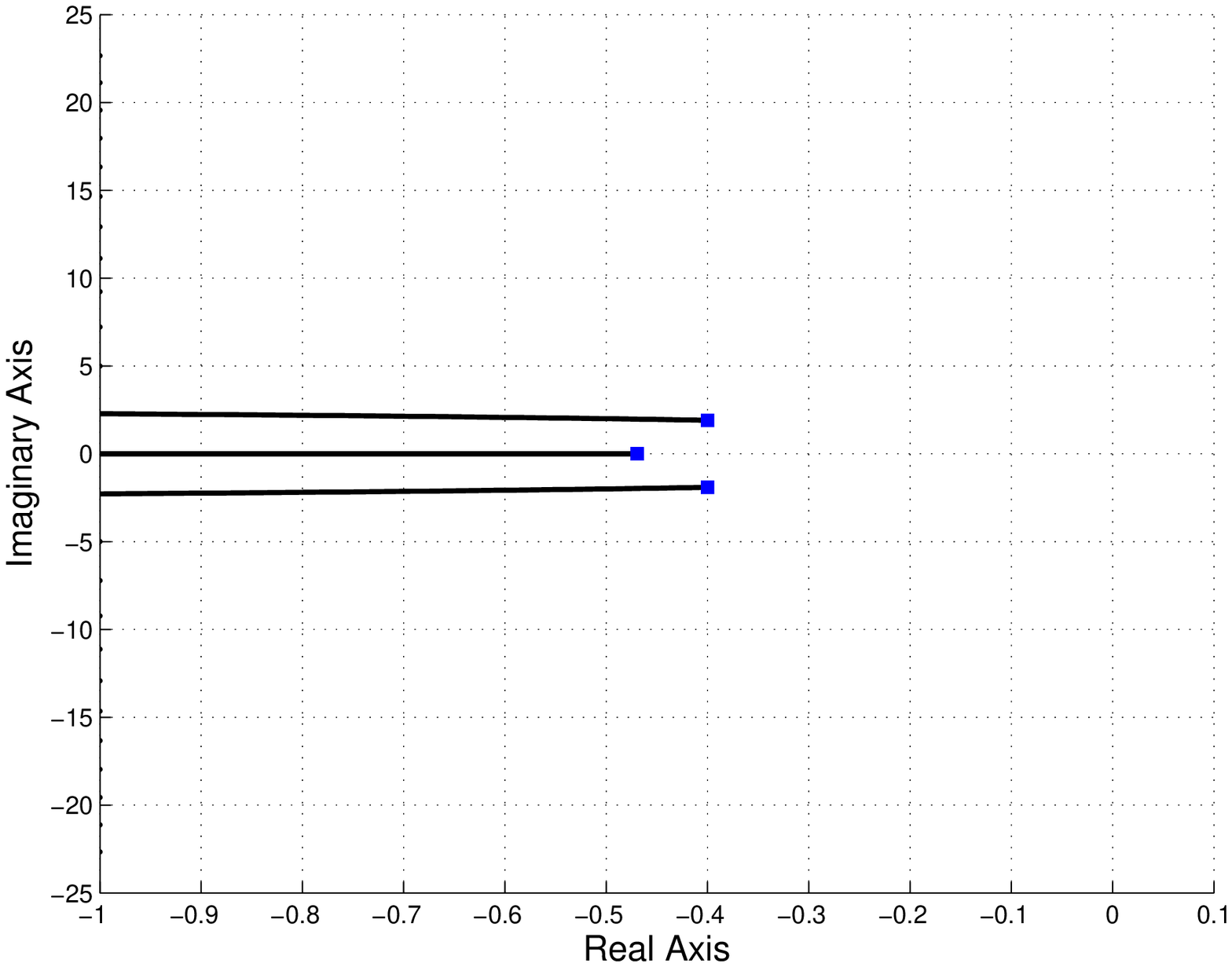}
$\tau\in[0,1]$
\end{center}
\end{minipage}
\begin{minipage}[t]{4cm}
\end{minipage}
\begin{minipage}[t]{4cm}
\begin{center}
\includegraphics[width=4cm]{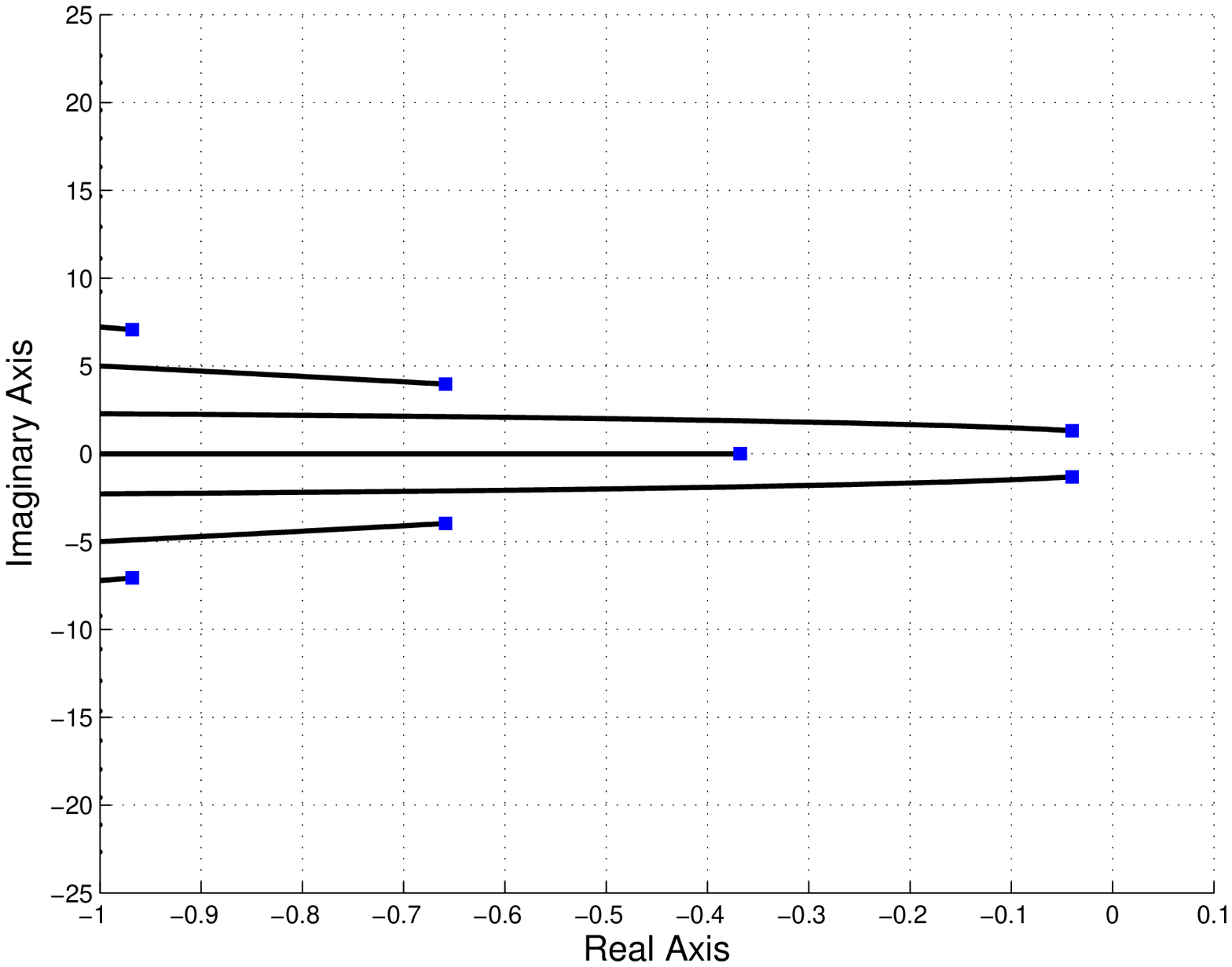}
$\tau\in[0,2]$
\end{center}
\end{minipage}
\hfill
\begin{minipage}[t]{4cm}
\begin{center}
\includegraphics[width=4cm]{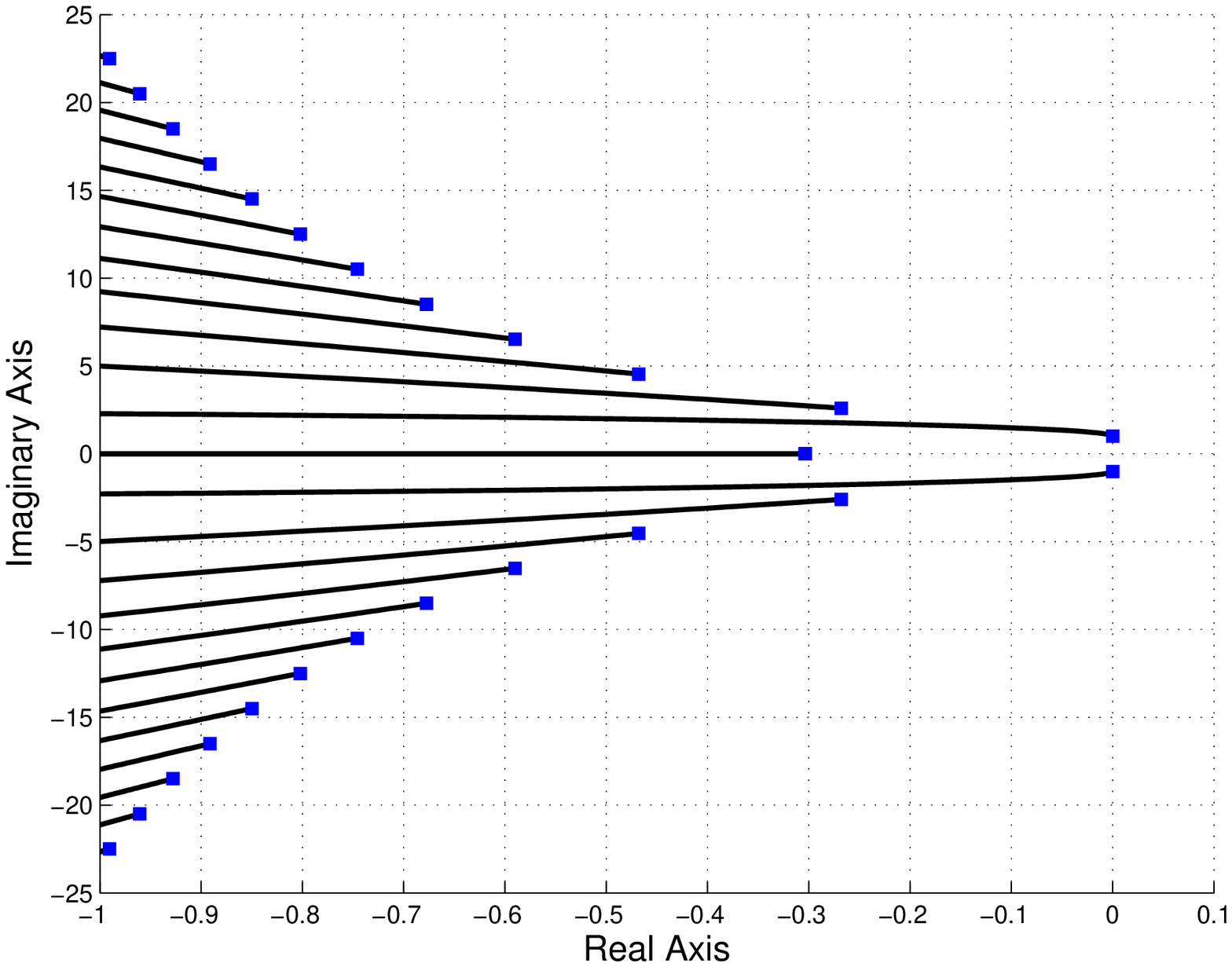}
$\tau\in[0,\pi]$
\end{center}
\end{minipage}
\caption{\label{fig:extrajofzeros} Zeros of characteristic quasi-polynomial for different delay intervals}
\end{figure}

The trajectories of the zeros of the quasi-polynomial are shown in
Figure \ref{fig:extrajofzeros}. The system is stable for
$\tau\in[0,\pi)$ as expected. When $\tau=\pi$, the system has
imaginary roots $s=\pm j$. The system has characteristic zeros close
to imaginary axis after $\tau>2.5$. Therefore, stability condition
$\tau<\pi$ is conservative. The largest error in the trajectories is
$7.6\;10^{-4}<\epsilon_{tz}$ as specified.

\section{Concluding Remarks} \label{sec:concl}
We present a novel method to find the trajectories of all the zeros of
the quasi-polynomial with single delay in predefined
right half-plane. The method is an important tool for stability
analysis of time-delay systems and effective for high-order systems.

Applications of the method to the zeros of the neutral
type quasi-polynomials and multiple delay case are the future research directions and the results will be
published elsewhere.

\section{Acknowledgments}
The author would like to thank M.~L.~Overton for his support.

\end{document}